\documentclass[10pt,journal,compsoc]{IEEEtran}

\ifCLASSOPTIONcompsoc
  \usepackage[nocompress]{cite}
\else
  \usepackage{cite}
\fi

%
\ifCLASSINFOpdf
\else
\fi
%
%

\usepackage[utf8]{inputenc} 
\usepackage[T1]{fontenc}    
\usepackage{hyperref}       
\usepackage{url}            
\usepackage{booktabs}       
\usepackage{amsfonts}       
\usepackage{nicefrac}       
\usepackage{microtype}      
\usepackage{amsmath}
\usepackage{graphicx}
\usepackage{xcolor}
\usepackage{colortbl}
\usepackage{multirow}
\usepackage{float}
\usepackage{subcaption}

\usepackage{enumitem}
\setlist[itemize]{leftmargin=*}
\setlist[enumerate]{leftmargin=*}
\usepackage{array} 
\newcolumntype{P}[1]{>{\centering\arraybackslash}p{#1}}
\usepackage{pifont}

\newcommand{\bi}{\begin{itemize}}
\newcommand{\ei}{\end{itemize}}
\newcommand{\be}{\begin{enumerate}}
\newcommand{\ee}{\end{enumerate}}

\usepackage{makecell}
\usepackage[linesnumbered,ruled,vlined]{algorithm2e}
\usepackage{algorithmic}
\usepackage[skins]{tcolorbox}

\usepackage{amsthm}

\newtheorem{theorem}{Theorem}[section]
\newtheorem{proposition}[theorem]{Proposition}

\newtheorem{definition}[theorem]{Definition}

\usepackage{wrapfig}

\usepackage[skins]{tcolorbox}

\usepackage{diagbox}

\begin{document}
%

\title{Comparative Separation: Evaluating Separation on Comparative Judgment Test Data}

\author{Xiaoyin Xi, Neeku Capak, Kate Stockwell,
        Zhe Yu,~\IEEEmembership{Member,~IEEE}
\IEEEcompsocitemizethanks{\IEEEcompsocthanksitem Xiaoyin Xi and Zhe Yu are with the Department
of Software Engineering, Rochester Institute of Technology.\protect\\
E-mail: xx4455, zxyvse@rit.edu
\IEEEcompsocthanksitem Neeku Capak is with University of California, Santa Barbara.\protect\\
E-mail: ncapak@ucsb.edu
\IEEEcompsocthanksitem Kate Stockwell is with University of Vermont.\protect\\
E-mail: Katherine.Stockwell@uvm.edu
}
}

\IEEEtitleabstractindextext{%
\begin{abstract}
This research seeks to benefit the software engineering society by proposing comparative separation, a novel group fairness notion to evaluate the fairness of machine learning software on comparative judgment test data. 
Fairness issues have attracted increasing attention since machine learning software is increasingly used for high-stakes and high-risk decisions. It is the responsibility of all software developers to make their software accountable by ensuring that the machine learning software do not perform differently on different sensitive groups -- satisfying the separation criterion. However, evaluation of separation (equalized odds in binary classification) requires (usually human annotated) ground truth labels for each test data point. This motivates our work on analyzing whether separation can be evaluated on comparative judgment test data. Instead of asking humans to provide the ratings or categorical labels on each test data point, comparative judgments are made between pairs of data points such as A is better than B or B requires more effort than A. According to the law of comparative judgment in Psychology, providing such comparative judgments yields a lower cognitive burden for humans than providing ratings or categorical labels. 

This work first defines the novel fairness notion comparative separation on comparative judgment test data, and the metrics to evaluate comparative separation. Then, both theoretically and empirically, we show that in binary classification problems, comparative separation is equivalent to separation. We also introduce null hypothesis tests to statistically evaluate whether separation or comparative separation is satisfied. Lastly, we analyze the number of test data points and test data pairs required to achieve the same level of statistical power in the evaluation of separation and comparative separation, respectively. This work is the first to explore fairness evaluation on comparative judgment test data. It shows the feasibility and the practical benefits of using comparative judgment test data for model evaluations.
\end{abstract}
\begin{IEEEkeywords}
ethics in software engineering, fairness in machine learning, bias evaluation, separation.
\end{IEEEkeywords}}

\maketitle

\IEEEdisplaynontitleabstractindextext

%
\IEEEpeerreviewmaketitle

\IEEEraisesectionheading{\section{Introduction}\label{sec:introduction}}

Machine learning (ML) systems are increasingly used in high-stakes decision-making contexts such as healthcare~\cite{mortazavi2016analysis}, criminal justice~\cite{redmond2002data,Angwin2016}, hiring~\cite{amazon}, and credit scoring~\cite{GERMAN}. Numerous studies have shown that data-driven models can unintentionally encode or amplify historical inequities, producing disparate outcomes across sensitive groups. For example, natural language models inherit human-like 
semantic biases \cite{Caliskan2017}, commercial face recognition systems show unequal accuracy across demographic subpopulations \cite{Buolamwini2018}, and risk assessment tools such as COMPAS demonstrate substantial racial disparities \cite{Angwin2016}. These observations highlight the critical need for 
fairness-aware model evaluation.

Among group fairness criteria, \emph{separation} -- equivalent to equalized odds in binary classification -- has emerged as one of the most influential definitions~\cite{hardt2016equality}. 
Separation requires the prediction $C$ to be conditionally independent of sensitive attributes $A$ given the ground truth label $Y$. It avoids key limitations of demographic parity~\cite{Dwork2012}, which 
refuses perfect predictors under differing base rates, and of individual fairness~\cite{Dwork2012b}, which depends on task-specific similarity metrics that may be difficult to define in practice. Separation has been widely adopted in fair classification \cite{hardt2016equality, Agarwal2019} and more recently in fair 
regression~\cite{Berk2017, XiFairReweighing2024}. 

A central challenge in evaluating separation is that it requires access to 
ground-truth labels for each instance in the test data. However, in many domains such labels are costly, noisy, subjective, or pose high cognitive burden for human annotators~\cite{yannakakis2015ratings,zhang2020learning} —- particularly in settings involving subjective assessments, educational measurement, or software engineering estimation. A substantial body of research demonstrates that \emph{comparative judgments} -— pairwise comparisons such as 
``A is better than B'' —- are more reliable, more consistent, 
and cognitively easier for humans to provide than direct ratings or categorical labels. This is consistent with 
the law of comparative judgment~\cite{Thurstone1927}. Empirical studies 
in education and assessment show a high reliability of comparative judgment~\cite{Bramley2015, Verhavert2019}, while comparative judgments as training data has proven effective in preference learning and ranking applications~\cite{Furnkranz2010, Brinker2004}.

Recent work in software engineering further confirms the benefits of comparative 
judgments: comparative learning models achieve performance competitive with or superior to regression models while reducing the annotation effort required from developers~\cite{khan2025efficient}. This result raises a compelling question for fairness 
evaluation: \textbf{Can separation be evaluated using only comparative judgments, instead of direct ratings or classifications?} If so, fairness assessment could be made dramatically more scalable and feasible in domains where comparative judgments are easier to obtain than direct ratings or classifications.

To answer this question, we propose \emph{comparative separation}, the first fairness 
notion defined entirely on comparative judgment test data. Rather than relying on the
ground-truth labels $Y$ for individual instances, comparative separation evaluates 
fairness over comparative judgments made between pairs of data points, $Y_{ij} = \mathrm{sgn}(Y_i - Y_j)$. 
We define comparative separation formally, introduce associated evaluation metrics, and establish an exact theoretical equivalence between comparative separation and separation in binary classification settings. This result shows feasibility of evaluating separation on comparative judgment test data. Another advantage of comparative separation is that it naturally generalizes to regression labels without any modification. It provides an easy way to evaluate group fairness in regression problems while existing metrics for separation evaluation on regression data utilize conditional mutual information and rely on additional predictors to approximate the probability densities~\cite{steinberg2020fairness,XiFairReweighing2024}. 

We further develop hypothesis-testing procedures  
for separation and comparative separation. Through power analysis on binary classification problems, we 
show that comparative separation achieves similar Type I/II error rates as 
separation with twice the number of pairwise evaluations —- an 
expected trade-off due to comparative data structure but justified by the reduced cognitive load of annotation \cite{Thurstone1927}. 

Our contributions are summarized as follows:

\begin{itemize}
    \item We introduce \emph{comparative separation}, a novel fairness notion that evaluates separation on comparative judgment test data.
    \item We provide formal metrics and statistical tests to evaluate comparative separation for both classification and regression problems.
    \item We prove that comparative separation is equivalent to 
    separation in binary classification.
    \item We present a statistical power analysis demonstrating that, in binary classifications, the evaluation of comparative separation requires twice the number of test data pairs than the evaluation of separation to reach the same level of statistical power.
    \item We empirically validate our findings through simulation and real-world fairness datasets in software engineering. All the code and data used in the experiments can be found publicly on GitHub\footnote{\url{https://github.com/hil-se/Comparative_Separation}}.
\end{itemize}

Overall, this work is the first to evaluate algorithmic fairness using comparative judgment test data instead of direct ratings or classifications. This significantly broadens the applicability of fairness evaluation to domains where comparative annotations are natural, reliable, or more cost-effective than 
direct ratings or classifications.

\section{Background}\label{sec:background}

\subsection{Group fairness and the Separation Criterion}

Machine learning models increasingly support decisions in high-stakes domains such 
as criminal justice, hiring, and healthcare. Extensive evidence shows that these 
models can exhibit significant performance disparities across demographic groups—
including systematic stereotypes embedded in language models \cite{Caliskan2017}, 
unequal accuracy in commercial facial recognition \cite{Buolamwini2018}, and 
racially disparate error rates in risk assessment tools such as COMPAS 
\cite{Angwin2016}. Such concerns have motivated the development of formal 
fairness criteria that quantify unwanted dependencies between predictions and 
sensitive attributes.

Among group fairness notions, \emph{separation}, also known as \emph{equalized 
odds} in binary classifications, has become particularly influential. Separation requires the prediction 
$C$ to be conditionally independent of the sensitive attributes $A$ given the ground 
truth label $Y$ \cite{hardt2016equality}. 
\begin{definition}
Given a test set $S = \{(x_i\in\mathbb{R}^d,y_i\in\mathbb{R}^1,a_i\in\{0,1\})\}^{n}_{i=1}$ sampled from a distribution $\mathcal{D}$ over a domain $X\times  Y \times  A$ and a prediction model $f_{\theta}:{X}\rightarrow {Y}$, the model satisfies comparative separation if and only if
\begin{equation}\label{separation}
{C} \perp {A} | {Y}
\end{equation}
where $c_i = f_{\theta}(x_i)\in{C}$ is the prediction of a test data point.
\end{definition}
This ensures parity in true positive rates and false positive rates across groups, aligning 
with operational needs in risk-based decision making. While demographic parity 
\cite{Dwork2012} and individual fairness \cite{Dwork2012b} provide alternative 
perspectives, they face practical limitations such as mismatched base rates or the 
need for domain-specific similarity metrics.

Recent work extends separation beyond binary classification. Steinberg et al.~\cite{steinberg2020fairness} proposed a set of metrics to evaluate separation on regression problems. Xi and Yu’s 
FairReweighing \cite{XiFairReweighing2024} further advanced this direction by extending the mutual information-based metrics to scenarios with continuous 
sensitive attributes and introduced a density-based preprocessing algorithm to improve separation in model training. These directions highlight an increasing interest in generalizing 
separation to flexible outcome and attribute spaces. However, these approaches utilize additional predictors to approximate the probability densities in the calculation of the metrics, thus may introduce errors and biases in the evaluation of separation on regression problems. Narasimhan et al.~\cite{narasimhan2020pairwise} proposed to evaluate Pairwise Fairness instead for regression and ranking problems -- using the regression labels to generate pairwise labels and evaluate pairwise fairness on the pairwise labels. This bypasses the difficulty of estimating the probability densities in separation evaluations. 

A central limitation remains: separation evaluation requires accurate instance-level 
ground-truth labels, which are often costly or unreliable to obtain—particularly in 
domains involving subjective assessment or knowledge work.

\subsection{Comparative Judgment}

Comparative judgment offers an alternative supervision mechanism that 
mitigates many challenges of absolute labeling. Rather than asking annotators to 
assign numerical scores or classifications, it asks them to choose between two items (e.g. which user story requires more effort to implement). Thurstone’s law of comparative judgment \cite{Thurstone1927} 
established that humans are more reliable and consistent in pairwise comparisons 
than in direct ratings. Subsequent psychometric work confirms that comparative judgment produces 
highly reliable measurement scales \cite{Bramley2015, Verhavert2019} and reduces 
cognitive burden by avoiding arbitrary or ambiguous rating scales.

Comparative judgment’s reliability has motivated its adoption in domains such as educational assessment, content evaluation, and human perception modeling. Its favorable statistical properties suggest that comparative judgment could serve as an effective source of ground-truth supervision when absolute labels are difficult or noisy.

\subsection{Machine Learning with Comparative Judgments}

Comparative data align naturally with pairwise learning and preference modeling. 
F{\"u}rnkranz and H{\"u}llermeier \cite{Furnkranz2010} formalized preference learning 
as a supervised learning paradigm, and Brinker \cite{Brinker2004} developed active 
learning strategies for reducing annotation costs in pairwise ranking tasks. Burges 
\emph{et al.} \cite{Burges2005} introduced RankNet, demonstrating the scalability and 
effectiveness of gradient-based learning-to-rank algorithms built on pairwise 
comparisons.

In software engineering, comparative labels have recently received attention because 
developers often struggle with producing consistent absolute effort estimates. Qian 
\emph{et al.} \cite{Qian2015} studied adaptive pairwise preference elicitation, while 
Islam \emph{et al.} \cite{khan2025efficient} showed that comparative learning 
models can match or exceed the performance of regression models trained on 
absolute story-point labels, while reducing annotation effort and cognitive load. 
These results underscore the practical advantages of pairwise annotations in 
settings where absolute labels are difficult to obtain or unreliable.

\subsection{Fairness Evaluation Under Alternative Supervision}

While comparative judgment has proven effective for training machine learning models, 
its potential for fairness evaluation remains underexplored. Existing fairness metrics 
(separation, demographic parity, calibration, etc.) all assume independent instance-level outcomes. The FairReweighing framework \cite{XiFairReweighing2024} 
and fairness discussions highlight that fairness evaluation often fails in domains where direct labels are unavailable or too expensive to collect. These works implicitly point to the need for fairness 
frameworks that operate under more flexible or cognitively tractable supervision.

Broader fairness literature recognizes similar challenges. Researchers have noted that labels can themselves be biased, noisy, or subjective, which complicates the measurement of fairness \cite{Mitchell2018, Kleinberg2018}. Other studies explore fairness in ranking \cite{Singh2018, Zehlike2017} or recommendation systems, but 
these works focus on fair algorithmic behavior, not on fairness evaluation from 
comparative judgments.

To date, little work investigates whether fairness evaluation -- particularly for error-rate–based criteria such as separation -— can be constructed directly from comparative judgments. The combination of insights from FairReweighing 
\cite{XiFairReweighing2024}, preference learning \cite{Furnkranz2010}, comparative judgment research \cite{Thurstone1927}, and comparative learning in software engineering \cite{khan2025efficient} motivates the exploration of whether fairness criteria can be defined in ways compatible with comparative judgments. This gap in the literature establishes the need for formulating fairness notions that operate on pairwise data and do not rely on absolute labels.

\section{Methodology}\label{sec:methodology}

Inspired by the pairwise fairness notion of Narasimhan et al.~\cite{narasimhan2020pairwise}, we push it one-step further by directly collecting pairwise annotations (comparative judgments) instead of pointwise annotations. In addition, we propose comparative separation in this section and prove that it is equivalent to separation in binary classification problems in RQ1. We also prove that the proposed comparative separation criterion can be evaluated with the same pairwise fairness metrics proposed by Narasimhan et al.~\cite{narasimhan2020pairwise} on comparative judgment test set. In this way, comparative separation can be naturally evaluated for both classification and regression problems. We then propose to evaluate comparative separation and separation with statistical tests in RQ2 and analyze the number of test data required to statistically evaluate comparative separation compared to that for separation in RQ3.

\subsection{Comparative Separation}

Similarly to the separation criterion $C\perp A | Y$, we define comparative separation as follows:
\begin{definition}
Consider a test set of comparative judgments $S_p = \{(x_i, x_j\in\mathbb{R}^d,y_{ij}\in\{-1,1\},a_i,a_j\in\{0,1\})\}^{n_p}$ sampled from a distribution $\mathcal{D}_{ij}$ over a domain $X_{ij}\times  Y_{ij} \times  A_{ij}$, where $x_{ij} = (x_i, x_j)\in X_{ij}$, $a_{ij} = (a_i, a_j)\in A_{ij}$, and $y_{ij} =\text{sgn}(y_i-y_j)\in  Y_{ij}$ is the comparative judgment between $x_i$ and $x_j$. Given a prediction model $f_{\theta}:{X}\rightarrow {Y}$, the model satisfies comparative separation if and only if
\begin{equation}\label{csp}
{C}_{ij} \perp {A}_{ij} | {Y}_{ij}
\end{equation}
where $c_{ij} = \text{sgn}(f_{\theta}({x}_i)-f_{\theta}(x_j)) \in {C}_{ij}$ is the comparative prediction of a test data pair.
\end{definition}

To evaluate the comparative separation in \eqref{csp}, we further define the comparative true positive rate $TPR(a_{ij})$:
\begin{equation}\label{metrics}
\begin{aligned}
&TPR(a_{ij}) = P(C_{ij} = 1| A_{ij} = a_{ij},\, Y_{ij} = 1),
\end{aligned}
\end{equation}
where $a_{ij} \in \{(0,0),(0,1),(1,0),(1,1)\}.$
\begin{theorem}\label{theorem1}
Comparative separation $C_{ij} \perp A_{ij} | Y_{ij}$ is satisfied if and only if \begin{equation}\label{condition}
\begin{aligned}
\forall a_{ij},\quad TPR(a_{ij}) =  \mathrm{c},
\end{aligned}
\end{equation}
where $\mathrm{c}$ is a constant.
\end{theorem}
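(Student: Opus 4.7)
The plan is to prove both directions of the biconditional, with the nontrivial content concentrated in the converse and handled via a symmetry argument on the pairwise distribution.

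For the forward direction ($\Rightarrow$), I would unpack the conditional independence $C_{ij} \perp A_{ij} \mid Y_{ij}$: fixing $Y_{ij} = 1$, the conditional probability $P(C_{ij} = 1 \mid A_{ij} = a_{ij}, Y_{ij} = 1)$ must collapse to $P(C_{ij} = 1 \mid Y_{ij} = 1)$ for every $a_{ij}$, which is exactly the statement that $TPR(a_{ij})$ is a constant.

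For the converse ($\Leftarrow$), I would proceed in two steps. First, on the slice $Y_{ij} = 1$, the assumption $TPR(a_{ij}) = \mathrm{c}$ together with $C_{ij} \in \{-1, 1\}$ (a point I would flag as the implicit binary-prediction assumption, valid when predictions are strictly ordered or ties are broken consistently) immediately pins down $P(C_{ij} = -1 \mid A_{ij}, Y_{ij} = 1) = 1 - \mathrm{c}$, so the entire conditional distribution of $C_{ij}$ is constant in $A_{ij}$. Second, for $Y_{ij} = -1$, I would invoke the swap symmetry of the pairwise distribution: relabeling $(i,j) \mapsto (j,i)$ leaves $\mathcal{D}_{ij}$ invariant, negates both $Y_{ij}$ and $C_{ij}$, and transposes $A_{ij} = (a_i, a_j)$ to $(a_j, a_i)$. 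This yields, for every $(a,b)$,
\[
P(C_{ij} = 1 \mid A_{ij} = (a,b), Y_{ij} = -1) = P(C_{ij} = -1 \mid A_{ij} = (b,a), Y_{ij} = 1) = 1 - TPR((b,a)) = 1 - \mathrm{c},
\]
so $P(C_{ij} \mid A_{ij}, Y_{ij} = -1)$ is also constant across $A_{ij}$. Combining the two slices delivers $C_{ij} \perp A_{ij} \mid Y_{ij}$.

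The main obstacle will be justifying the swap symmetry cleanly. The theorem does not explicitly posit that $\mathcal{D}_{ij}$ is invariant under exchanging the two members of a pair, yet this exchangeability is what allows the constant-TPR assumption on the four groups $\{(0,0),(0,1),(1,0),(1,1)\}$ to propagate from the $Y_{ij} = 1$ stratum to the $Y_{ij} = -1$ stratum via the bijection $(a,b) \mapsto (b,a)$. I would therefore state the exchangeability hypothesis explicitly in the proof --- noting that it holds automatically whenever pairs arise from two independent draws on the underlying distribution $\mathcal{D}$ --- and with that in hand the rest of the argument is a short calculation.
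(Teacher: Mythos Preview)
Your proposal is correct and matches the paper's proof essentially step for step: the paper reduces the full conditional-independence statement to constancy of $TPR(a_{ij})$ via exactly the two ingredients you identify --- the swap symmetry (stated in the paper as $P(C_{ij}=c_{ij}\mid A_{ij}=a_{ij},Y_{ij}=1)=P(C_{ij}=-c_{ij}\mid A_{ij}=a_{ji},Y_{ij}=-1)$) and the binary complement $P(C_{ij}=1\mid\cdot)+P(C_{ij}=-1\mid\cdot)=1$. Your explicit flagging of the exchangeability hypothesis underlying the swap symmetry is a welcome clarification that the paper simply asserts without comment.
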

\begin{proof}
We have $C_{ij} \perp A_{ij} | Y_{ij}$ if and only if 
\begin{equation}\label{eq:comp}
\begin{aligned}
&P(C_{ij} = c_{ij}| A_{ij} = a_{ij},\, Y_{ij} = y_{ij}) = P(C_{ij} = c_{ij}| Y_{ij} = y_{ij}) \\
&\forall c_{ij},\,y_{ij}\in\{-1,\,1\},\,a_{ij}\in\{(0,0),(0,1),(1,0),(1,1)\}.
\end{aligned}
\end{equation}
This is equivalent to 
\begin{equation}\label{eq:two}
\begin{aligned}
&P(C_{ij} = 1| A_{ij} = a_{ij},\, Y_{ij} = 1) = P(C_{ij} = 1| Y_{ij} = 1)\\
&\forall a_{ij}\in\{(0,0),(0,1),(1,0),(1,1)\},
\end{aligned}
\end{equation}
since \eqref{eq:sym} and \eqref{eq:sum}.
\begin{equation}\label{eq:sym}
\begin{aligned}
&P(C_{ij} = c_{ij}| A_{ij} = a_{ij},\, Y_{ij} = 1) \\= &P(C_{ij} = -c_{ij}| A_{ij} = a_{ji},\, Y_{ij} = -1)
\end{aligned}
\end{equation}
\begin{equation}\label{eq:sum}
\begin{aligned}
&P(C_{ij} = 1| A_{ij} = a_{ij},\, Y_{ij} = 1) +\\
&P(C_{ij} = -1| A_{ij} = a_{ij},\, Y_{ij} = 1) = 1.
\end{aligned}
\end{equation}
Apply \eqref{metrics} to \eqref{eq:two}, we have \eqref{condition} with
$$\mathrm{c}=P(C_{ij}=1|Y_{ij}=1).$$
\end{proof}

\subsection{RQ1: Is comparative separation equivalent to separation in binary classifications?} As a special case, comparative separation is equivalent to separation in binary classification.

\begin{theorem}\label{theorem2}
In binary classification problems where $Y,\, C\in\{0,\,1\}$, $C_{ij} \perp A_{ij} | Y_{ij} \Leftrightarrow C \perp A | Y$. 
\end{theorem}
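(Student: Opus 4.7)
The plan is to prove the equivalence by writing the pairwise conditional probabilities used in comparative separation in terms of the group-conditional rates $\mathrm{TPR}(a)=P(C=1\mid A=a, Y=1)$ and $\mathrm{FPR}(a)=P(C=1\mid A=a, Y=0)$ that define separation. I would assume that the two members of each pair are sampled independently from $\mathcal{D}$, which lets me factorize, for $Y_{ij}=1$ (i.e.\ $y_i=1,y_j=0$),
\begin{equation*}
P(C_{ij}=1 \mid A_{ij}=(a_i,a_j), Y_{ij}=1) = \mathrm{TPR}(a_i)\bigl(1-\mathrm{FPR}(a_j)\bigr),
\end{equation*}
together with the analogous expression $(1-\mathrm{TPR}(a_i))\,\mathrm{FPR}(a_j)$ for $C_{ij}=-1$. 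The $Y_{ij}=-1$ cases are handled by the symmetry already used in the proof of Theorem~\ref{theorem1}.

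The forward direction is then a one-step substitution: if $C\perp A\mid Y$, both $\mathrm{TPR}(a)$ and $\mathrm{FPR}(a)$ are constant in $a$, so every factorized pairwise expression becomes constant in $A_{ij}$, and comparative separation follows.

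For the reverse direction, I would apply Theorem~\ref{theorem1} to reduce $C_{ij}\perp A_{ij}\mid Y_{ij}$ to the single requirement that the four values $\mathrm{TPR}(a_i)\bigl(1-\mathrm{FPR}(a_j)\bigr)$ over $(a_i,a_j)\in\{0,1\}^2$ share a common value~$c$. When $c>0$ neither factor can vanish, so comparing $(0,0)$ with $(0,1)$ and with $(1,0)$ cancels the common factor and gives $\mathrm{FPR}(0)=\mathrm{FPR}(1)$ and $\mathrm{TPR}(0)=\mathrm{TPR}(1)$, i.e.\ separation. The main obstacle will be the degenerate case $c=0$, in which the pairwise TPR-condition is satisfied trivially and does not by itself constrain the other rate. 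I would close this case by returning to the defining conditional independence and using that $P(C_{ij}=-1\mid A_{ij},Y_{ij}=1)=(1-\mathrm{TPR}(a_i))\,\mathrm{FPR}(a_j)$ must also be independent of $A_{ij}$; a short case split on which factor vanishes (e.g.\ $\mathrm{TPR}(\cdot)\equiv 0$ versus $1-\mathrm{FPR}(\cdot)\equiv 0$) then supplies the missing equality, while the genuinely mixed configurations (one group's $\mathrm{TPR}$ zero while the other's is not, etc.) contradict the constancy of $\mathrm{TPR}(a_i)(1-\mathrm{FPR}(a_j))$ and are ruled out. Thus separation holds in every case.
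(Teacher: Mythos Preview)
Your argument is essentially the paper's: both factorize the pairwise rate as $\mathrm{TPR}(a_i)\cdot\mathrm{TNR}(a_j)$ under independent sampling, invoke Theorem~\ref{theorem1}, and cancel a common factor to recover equalized TPR and TNR. The paper simply writes $\mathrm{TPR}(A{=}1)\,\mathrm{TNR}(A{=}1)=\mathrm{TPR}(A{=}0)\,\mathrm{TNR}(A{=}1)=\mathrm{TPR}(A{=}1)\,\mathrm{TNR}(A{=}0)$ and cancels, without ever acknowledging the possibility that a factor vanishes; your treatment of the degenerate case $c=0$ is therefore more complete than what the paper offers.

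One small slip in that degenerate analysis: a ``mixed'' configuration such as $\mathrm{TPR}(0)=0$, $\mathrm{TPR}(1)>0$ does \emph{not} by itself contradict the constancy of $\mathrm{TPR}(a_i)\bigl(1-\mathrm{FPR}(a_j)\bigr)$ at $0$; it merely forces $\mathrm{FPR}(0)=\mathrm{FPR}(1)=1$, which is still consistent with $c=0$. What actually rules it out is the \emph{second} constancy you invoke, that of $(1-\mathrm{TPR}(a_i))\,\mathrm{FPR}(a_j)$, which then equals $1-\mathrm{TPR}(a_i)$ and fails to be constant. So the case split works, but the contradiction in the mixed subcases should be attributed to the $C_{ij}=-1$ condition rather than to the $C_{ij}=1$ one.
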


\begin{proof}
When $Y,\, C\in\{0,\,1\}$, we have \eqref{eq:TPRij}.
\begin{equation}\label{eq:TPRij}
\begin{aligned}
&TPR(a_{ij}) = P(C_{ij} = 1| A_{ij} = a_{ij},\, Y_{ij} =1) \\
= &P(C_i = 1|A_{i} = a_{i},\,Y_i=1)\cdot P(C_j = 0|A_{j} = a_j,\,Y_j=0)\\
=&TPR(A=a_i)\cdot TNR(A=a_j),
\end{aligned}
\end{equation}
Now given $C_{ij} \perp A_{ij} | Y_{ij}$ and Theorem~\ref{theorem1}, we have
\begin{equation*}
\begin{aligned}
&TPR(A=1)\cdot TNR(A=1) \\= &TPR(A=0)\cdot TNR(A=1) \\
 = &TPR(A=1)\cdot TNR(A=0).
\end{aligned}
\end{equation*}
That is
\begin{equation}\label{eq:rate}
\begin{aligned}
&TPR(A=1) = TPR(A=0) \\
&TNR(A=1) = TNR(A=0).
\end{aligned}
\end{equation}
Therefore, we have $C_{ij} \perp A_{ij} | Y_{ij} \Rightarrow C\perp A|Y$.

Now given $C\perp A|Y$, we have \eqref{eq:rate}, thus
\begin{equation*}
\begin{aligned}
&TPR(A=a_i)\cdot TNR(A=a_j) = TPR(a_{ij}) = \mathrm{c}\\
&\forall a_i,\, a_j \in \{0,\,1\}. 
\end{aligned}
\end{equation*}
So we also have $$C\perp A|Y \Rightarrow C_{ij} \perp A_{ij} | Y_{ij}.$$
\end{proof}

\noindent\textbf{Answer to RQ1: }Theoretically, we have proved that comparative separation is equivalent to separation in binary classifications in Theorem~\ref{theorem2}. We will also validate this conclusion with simulations and experiments on real world data in Section~\ref{sec:Experiment}.

\subsection{RQ2: How to statistically test whether separation or comparative separation is satisfied for binary classifiers?}

\subsubsection{Separation}\label{sect:separation}
Traditionally, separation is tested simply by reporting the differences between the estimated metrics:
\begin{equation}\label{eq:aodeod}
\begin{aligned}
&EOD = \hat{TPR}(A=1) - \hat{TPR}(A=0) \\
&AOD = \frac{1}{2}(EOD + \hat{FPR}(A=1) - \hat{FPR}(A=0))
\end{aligned}
\end{equation}
where
\begin{equation}\label{eq:hattpr}
\begin{aligned}
&\hat{TPR}(A=a) =  \frac{|\{C=1, Y=1, A=a\}|}{|\{Y=1, A=a\}|}\\
&\hat{FPR}(A=a) =  \frac{|\{C=1, Y=0, A=a\}|}{|\{Y=0, A=a\}|}\\
\end{aligned}
\end{equation}
are the estimations of $TPR(A=a)$ and $FPR(A=a)$ with the test set and $|\{ \text{conditions}\}|$ represents the number of test data points satisfying the conditions. Given a test set with a finite number of $n$ data points, the estimations $\hat{TPR}(A=a)$ and $\hat{FPR}(A=a)$ have errors. To be more specific, based on the law of large numbers and the central limit theorem, these estimations approximate normal distributions with a large enough $n$. 
\begin{equation}\label{norm1}
\begin{aligned}
\hat{TPR}(A=a) = \overline{\{C|Y=1, A=a\}} \sim {N}(\mu, s^2)
\end{aligned}
\end{equation}
where $ \overline{\{C|Y=1, A=a\}}$ calculates the mean of $C$ across the test set samples with $Y=1, A=a$. $\mu=TPR(A=a)$ is the underlying real true positive rate (not the sampled one) and 
\begin{equation}\label{var_sep}
\begin{aligned}
s^2 = (1-\mu)\cdot\mu/|\{Y=1, A=a\}|
\end{aligned}
\end{equation}
is the sampled variance. A larger sample size $|\{Y=1, A=a\}|$ leads to a smaller variance and a more accurate estimation of $TPR(A=a)$. 

Now, in addition to simply reporting $EOD = \hat{TPR}(A=1) - \hat{TPR}(A=0)$, we can statistically analyze whether $TPR(A=1)$ and $TPR(A=0)$ are significantly different by testing the null hypothesis
$$H_{0}^{t}: TPR(A=1) = TPR(A=0).$$
When the sample size is large enough, $|\{Y=1\}|\ge 30$, a z-test can be utilized for the null hypothesis test of $H_{0}^{t}$:
\begin{equation*}
\begin{aligned}
z = \frac{\hat{TPR}(A=1) - \hat{TPR}(A=0)}{\sqrt{\hat{s}^2(A=1)+\hat{s}^2(A=0)}} 
\end{aligned}
\end{equation*}
Here, the sampled variances can be estimated as
\begin{equation}\label{eq_s2hat}
\begin{aligned}
\hat{s}^2 =\frac{ (1-\hat{TPR}(A=a))\cdot \hat{TPR}(A=a)}{|\{Y=1, A=a\}|}.
\end{aligned}
\end{equation}
When $|z|$ is large enough to reject the null hypothesis (e.g. with p value $< \alpha = 0.05$), we can conclude that separation is violated. The same can be applied to the estimation and test the second null hypothesis
$$H_{0}^{f}: FPR(A=1) = FPR(A=0).$$
Note that, separation is considered violated when either null hypothesis $H_{0}^{t}$ or $H_{0}^{f}$ is rejected. Therefore, the Type I error rate (false positive rate) of the statistical test of separation will be $1-(1-\alpha)^2 = 0.0975$. That is, when $C\perp A| Y$, there is a probability of $0.0975$ that at least one null hypothesis is rejected and thus separation is evaluated as violated.

\subsubsection{Comparative Separation}\label{sect:comp_separation}
Now, let us look at the comparative separation. Similar to separation, we can test whether comparative separation is satisfied with null hypotheses of $TPR(a_{ij}) = TPR(a_{pq})$. Given \eqref{eq:sym}, we have: 
\begin{equation}\label{TPR_hat}
\begin{aligned}
\hat{TPR}(a_{ij}) = \frac{\hat{TP}(a_{ij})+\hat{TN}(a_{ij})}{\hat{P}(a_{ij})+\hat{N}(a_{ij})}
\end{aligned}
\end{equation}
where
\begin{equation}
\begin{aligned}
\hat{TP}(a_{ij}) = &|\{C_{ij}=1, Y_{ij}=1, A_{ij}=a_{ij}\}|\\
\hat{TN}(a_{ij}) = &|\{C_{ij}=-1, Y_{ij}=-1, A_{ij}=a_{ji}\}|\\
\hat{P}(a_{ij}) = &|\{Y_{ij}=1, A_{ij}=a_{ij}\}|\\
\hat{N}(a_{ij}) = &|\{Y_{ij}=-1, A_{ij}=a_{ji}\}|\\
\end{aligned}
\end{equation}
Based on the law of large numbers and the central limit theorem, $\hat{TPR}(a_{ij})$ also approximates a normal distribution with a large enough sample size $\hat{P}(a_{ij})+\hat{N}(a_{ij})$.
\begin{equation}\label{norm2}
\begin{aligned}
\hat{TPR}(a_{ij}) = \overline{\{C_{ij}\odot Y_{ij}|Y_{ij}=1, A_{ij}=a_{ij}\}} \sim {N}(\mu_{ij}, s^2_{ij})
\end{aligned}
\end{equation}
where 
\begin{equation*}
C_{ij}\odot Y_{ij} = \left\{\begin{matrix} &1  &\text{when }C_{ij} = Y_{ij} \\ &0 & \text{when }C_{ij} \neq Y_{ij}
\end{matrix}\right..
\end{equation*}
$\mu_{ij}=TPR(a_{ij}) $ is the underlying real true positive rate (not the sampled one) and 
\begin{equation}\label{varij_sep}
\begin{aligned}
s^2_{ij} = (1-\mu_{ij})\cdot \mu_{ij}/(\hat{P}(a_{ij})+\hat{N}(a_{ij}))
\end{aligned}
\end{equation}
is the sampled variance. A larger sample size leads to a smaller variance and a more accurate estimation of $TPR(a_{ij})$. 

When the sample size is large enough, $\hat{P}(a_{ij})+\hat{N}(a_{ij})\ge 30$, a z-test can be utilized for the Null-hypothesis test of $TPR(a_{ij}) = TPR(a_{pq})$:
\begin{equation*}
\begin{aligned}
z = \frac{\hat{TPR}(a_{ij}) - \hat{TPR}(a_{pq})}{\sqrt{\hat{s}^2_{ij}+\hat{s}^2_{pq}}} 
\end{aligned}
\end{equation*}
Here, the sampled variances are estimated as
\begin{equation}\label{eq_s2ijhat}
\begin{aligned}
\hat{s}^2_{ij} = (1-\hat{TPR}(a_{ij}))\cdot \hat{TPR}(a_{ij})/(\hat{P}(a_{ij})+\hat{N}(a_{ij})).
\end{aligned}
\end{equation}
When $|z|$ is large enough to reject the null hypothesis (e.g. with p value $< \alpha = 0.05$), we can conclude that comparative separation is violated. To reach the same Type I error rate as separation, we can test the following two null hypotheses for binary classifiers:
\begin{equation}\label{h0c}
\begin{aligned}
H_0^{c}:TPR(1,0) = TPR(0,1),
\end{aligned}
\end{equation}
\begin{equation}\label{h0w}
\begin{aligned}
H_0^{w}:TPR(1,1) = TPR(0,0).
\end{aligned}
\end{equation}
\begin{proposition}
In binary classification problems where $Y,\, C\in\{0,\,1\}$, comparative separation $C_{ij} \perp A_{ij} | Y_{ij}$ is satisfied if and only if both null hypotheses $H_0^{c}$ and $H_0^{c}$ are accepted. 
\end{proposition}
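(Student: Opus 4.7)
The plan is to reduce the claim to Theorem~\ref{theorem2} by exploiting the factorization of $TPR(a_{ij})$ given in equation~\eqref{eq:TPRij}. Writing $p_a := TPR(A=a)$ and $q_a := TNR(A=a)$ for $a\in\{0,1\}$, that factorization gives $TPR(0,0)=p_0 q_0$, $TPR(0,1)=p_0 q_1$, $TPR(1,0)=p_1 q_0$, and $TPR(1,1)=p_1 q_1$. So the hypotheses $H_0^{c}$ and $H_0^{w}$ are just the two polynomial equations $p_1 q_0 = p_0 q_1$ and $p_1 q_1 = p_0 q_0$, while comparative separation (by Theorem~\ref{theorem1}) requires all four products to be equal. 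The task is therefore to show that these two ``cross'' equations are enough to collapse the $2\times 2$ matrix $(p_a q_b)$ to a constant.

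For the forward direction I would chain the results already proved: comparative separation implies separation by Theorem~\ref{theorem2}, i.e., $p_0 = p_1$ and $q_0 = q_1$ (equation~\eqref{eq:rate}), so every $TPR(a_{ij})$ equals the same product $p q$ and both $H_0^{c}$ and $H_0^{w}$ hold trivially.

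The reverse direction is where the real content lies. Assuming $p_1 q_0 = p_0 q_1$ and $p_1 q_1 = p_0 q_0$, the cleanest route is to add the two equations to get $(p_1 - p_0)(q_0 + q_1) = 0$ and substitute back into $H_0^{c}$ to get $p_0(q_0 - q_1) = 0$; in the generic case $p_0, q_0+q_1 > 0$, this immediately yields $p_0 = p_1$ and $q_0 = q_1$, so separation holds and Theorem~\ref{theorem2} delivers comparative separation.

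The main obstacle is handling the degenerate cases, namely $p_0 = p_1 = 0$ (the classifier never predicts positives in a group) or $q_0 = q_1 = 0$ (never predicts negatives). In those cases the two hypotheses are satisfied vacuously but one of $\{p_0, p_1\}$ or $\{q_0, q_1\}$ need not match. I would resolve this either by (i) restricting the statement to test sets where all four estimators of $p_a, q_a$ are defined and positive, which is the same implicit assumption used in the proof of Theorem~\ref{theorem2}, or (ii) observing directly from~\eqref{eq:TPRij} that if $p_0 = p_1 = 0$ then all four $TPR(a_{ij})$ are already $0$ and hence comparative separation still holds via Theorem~\ref{theorem1}, and symmetrically for $q_0 = q_1 = 0$; so in every case the two null hypotheses are equivalent to comparative separation.
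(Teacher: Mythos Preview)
Your proposal is correct and follows the same strategy as the paper: rewrite $H_0^{c}$ and $H_0^{w}$ via the factorization \eqref{eq:TPRij}, deduce $TPR(A=0)=TPR(A=1)$ and $TNR(A=0)=TNR(A=1)$, and then invoke Theorem~\ref{theorem2}. The only difference is the algebraic manipulation---the paper multiplies the two equations and cancels $TNR(A=0)\cdot TNR(A=1)$ to obtain $TPR(A=1)^2=TPR(A=0)^2$, whereas you add them and factor out $(p_1-p_0)$---and your explicit treatment of the degenerate cases via option~(ii) cleanly closes a gap that the paper's cancellation step leaves implicit.
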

\begin{proof}
When comparative separation is satisfied, it is straightforward to accept $H_0^{c}$ and $H_0^{c}$ given \eqref{condition}.
When both $H_0^{c}$ and $H_0^{c}$ are accepted, we have \eqref{h0c1} and \eqref{h0w1} given \eqref{eq:TPRij}.
\begin{equation}\label{h0c1}
\begin{aligned}
TPR(A=1)\cdot TNR(A=0) &= TPR(A=0)\cdot TNR(A=1),
\end{aligned}
\end{equation}
\begin{equation}\label{h0w1}
\begin{aligned}
TPR(A=1)\cdot TNR(A=1) &= TPR(A=0)\cdot TNR(A=0).
\end{aligned}
\end{equation}
Multiplying \eqref{h0c1} by \eqref{h0w1} we have
\begin{equation*}
\begin{aligned}
&TPR(A=1)^2\cdot TNR(A=0)\cdot TNR(A=1) \\
= &TPR(A=0)^2\cdot TNR(A=1)\cdot TNR(A=0),
\end{aligned}
\end{equation*}
therefore $TPR(A=1) = TPR(A=0)$. Similarly, we have $TNR(A=1)=TNR(A=0)$. thus separation is satisfied and comparative separation is also satisfied according to Theorem~\ref{theorem2}.
\end{proof}
\noindent The Type I error rate (false positive rate) of the statistical test of comparative separation is $1-(1-\alpha)^2 = 0.0975$ as well.

\noindent\textbf{Answer to RQ2: }Both separation and comparative separation can be statistically evaluated using two null hypotheses. The Type I error rates for these evaluations are decided by the value of $\alpha$. This will be validated with simulations later in Section~\ref{sec:Experiment}.

\subsection{RQ3: How many test data are sufficient to reach a desired statistical power of testing separation and comparative separation, respectively?}

As described in RQ2, the separation evaluation requires statistical tests of two null hypotheses $H_0^t$ and $H_0^f$. Similarly, the comparative separation evaluation requires statistical tests of two null hypotheses $H_0^c$ and $H_0^w$. The Type I error rate is $0.0975$ when $\alpha=0.05$ for both separation and comparative separation evaluations. Now, we analyze the statistical power (true positive rate, which is one minus the Type II error) of each test.

When testing null hypotheses, statistical power is affected by multiple factors. For example, the power of testing $H_0^t: TPR(A=1) = TPR(A=0)$ is affected by 
\be
\item
The statistical significance criterion $\alpha$. A larger $\alpha$ makes it easier to reject the null hypothesis and thus increases the statistical power (true positive rate) but also increases the Type I error rate (false positive rate), vice versa.
\item
The magnitude of the effect of interest. This magnitude is related to the actual difference $\Delta = TPR(A=1) - TPR(A=0)$ and the population variance $\sigma^2$ of $TPR(A=a)$. For a specific binary classifier, the magnitude of the effect (or the expected effect size) is a fixed value regardless of the test samples. 
\item
The size of the test data. More specifically, the number of test samples used to test $H_0^t$. A larger sample size will always increase the statistical power but requires more annotation effort.
\ee
With $\alpha = 0.05$ determined to ensure a low Type I error rate and the magnitude of effect being fixed based on the true $TPR$, the sample size of the test data is crucial to ensure a low Type II error rate. Fortunately, we can estimate the statistical power beforehand to see whether more test data is required.

\begin{proposition}\label{theorem_power}
Given two test sets $W = \{(w_i\in\{0,1\})\}^{n_w}_{i=1}$ and $V = \{(v_i\in\{0,1\})\}^{n_v}_{i=1}$, where $n_w, n_v\ge30$; and a null hypothesis $H_0: P(W=1) = P(V=1)$, the probability $\beta$ of accepting $H_0$ with a significance criterion $\alpha=0.05$ is
\begin{equation}\label{power}
\begin{aligned}
\beta = &F_{\overline{W}-\overline{V}}(1.96\sigma;\mu,\sigma) - F_{\overline{W}-\overline{V}}(-1.96\sigma;\mu,\sigma)
\end{aligned}
\end{equation}
where $F(t;\mu,\sigma)$ is the cumulative distribution function of $(\overline{W}-\overline{V}) \sim N(\mu, \sigma)$ taking on a value less than or equal to $t$, and  
\begin{equation}\label{power2}
\begin{aligned}
\mu =& P(W=1)-P(V=1)\\
\sigma = & \sqrt{\frac{\sigma_w^2}{n_w}+\frac{\sigma_v^2}{n_v}}\\
\sigma_w^2 =&  P(W=1)(1-P(W=1))\\
\sigma_v^2 =&  P(V=1)(1-P(V=1)).
\end{aligned}
\end{equation}
\end{proposition}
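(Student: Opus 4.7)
The plan is to express the acceptance event of the z-test as a probability over the sampling distribution of $\overline{W}-\overline{V}$ and then read off the CDF expression directly. First, I would apply the Central Limit Theorem to each of the two independent samples: because $w_i,v_i$ are Bernoulli with success probabilities $P(W=1)$ and $P(V=1)$ and the sample sizes satisfy $n_w,n_v\ge 30$, the sample means $\overline{W}$ and $\overline{V}$ are approximately normal with the means and variances $P(W=1),\,\sigma_w^2/n_w$ and $P(V=1),\,\sigma_v^2/n_v$ given in \eqref{power2}. By independence of the two samples, their difference is approximately $N(\mu,\sigma^2)$ with $\mu$ and $\sigma^2$ exactly as defined in the statement.

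Next, I would restate the z-test decision rule used in Section~\ref{sect:separation} and Section~\ref{sect:comp_separation}: at significance level $\alpha=0.05$, $H_0$ is accepted precisely when $|z|<1.96$, where $z=(\overline{W}-\overline{V})/\hat{\sigma}$. Under the same large-sample regime that justifies the normal approximation, the estimator $\hat{\sigma}^2=\hat{\sigma}_w^2/n_w+\hat{\sigma}_v^2/n_v$ converges to $\sigma^2$ (by the law of large numbers applied to the Bernoulli variances), so the acceptance region can be written in the un-studentized form $\{|\overline{W}-\overline{V}|<1.96\sigma\}$ up to asymptotically negligible error. This is the standard approximation used in classical power analysis and is what allows closed-form evaluation.

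Finally, I would translate acceptance into a probability statement and read off the CDF. Writing
\begin{equation*}
\beta=P(|\overline{W}-\overline{V}|<1.96\sigma)=P(-1.96\sigma<\overline{W}-\overline{V}<1.96\sigma),
\end{equation*}
and invoking the normal approximation $\overline{W}-\overline{V}\sim N(\mu,\sigma)$ established in the first step, the probability equals $F_{\overline{W}-\overline{V}}(1.96\sigma;\mu,\sigma)-F_{\overline{W}-\overline{V}}(-1.96\sigma;\mu,\sigma)$, which is \eqref{power}. The main subtlety, and the one step I would want to state explicitly rather than hand-wave, is the replacement of the studentized critical region $|z|<1.96$ by the unstudentized region $|\overline{W}-\overline{V}|<1.96\sigma$; everything else is a direct application of CLT plus the definition of the CDF, and no nontrivial calculation is needed.
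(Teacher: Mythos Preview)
Your proposal is correct and follows essentially the same route as the paper: apply the CLT to each Bernoulli sample mean, use independence to obtain $\overline{W}-\overline{V}\sim N(\mu,\sigma)$, and then read off $\beta$ as the probability that the difference falls in $[-1.96\sigma,1.96\sigma]$. If anything, you are more careful than the paper, which silently works with the true $\sigma$ in the acceptance region and never mentions the replacement of $\hat{\sigma}$ by $\sigma$ that you flag as the one subtlety.
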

\begin{proof}
Based on central limit theorem and the law of large numbers, we have
\begin{equation}\label{wbar}
\begin{aligned}
\overline{W} \sim & N(P(W=1), \sqrt{\frac{\sigma_w^2}{n_w}})
\end{aligned}
\end{equation}
\begin{equation}\label{vbar}
\begin{aligned}
\overline{V} \sim & N(P(V=1), \sqrt{\frac{\sigma_v^2}{n_v}}).
\end{aligned}
\end{equation}
Subtract \eqref{vbar} from \eqref{wbar} we have
\begin{equation}\label{wv}
\begin{aligned}
(\overline{W} - \overline{V}) \sim & N(\mu, \sigma).
\end{aligned}
\end{equation}
\begin{figure}[tbh!]
    \centering
    \includegraphics[width=0.9\linewidth]{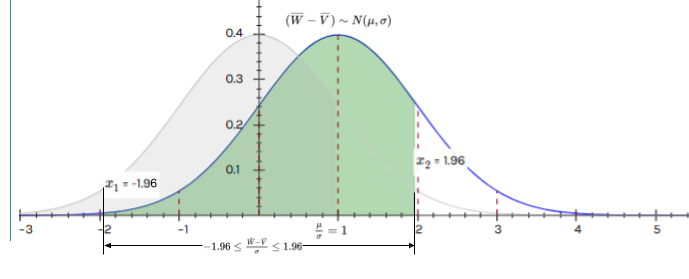}
    \caption{Illustration of the probability of $H_0$ being accepted with a two-tailed $p<\alpha=0.05$.}
    \label{fig:H0}
\end{figure}
As illustrated in Figure~\ref{fig:H0}, the probability of $H_0$ being accepted with a two-tailed $p<\alpha=0.05$ is the probability of $-1.96\sigma\le \overline{W} - \overline{V} \le 1.96\sigma$. Therefore we have
\begin{equation}\label{powerp}
\begin{aligned}
\beta = &P(-1.96\sigma\le \overline{W} - \overline{V} \le 1.96\sigma)\\
 = &F_{\overline{W}-\overline{V}}(1.96\sigma;\mu,\sigma) - F_{\overline{W}-\overline{V}}(-1.96\sigma;\mu,\sigma)
\end{aligned}
\end{equation}
\end{proof}

\subsubsection{Separation}

Separation is considered satisfied when both $H_0^t$ and $H_0^f$ are accepted. Therefore, the Type II error rate -- the probability of separation is tested as satisfied when it is actually violated -- on a test set $S$ is
\begin{equation}\label{beta}
\begin{aligned}
\beta_r = &\beta_t\cdot \beta_f.
\end{aligned}
\end{equation}
According to Proposition~\ref{theorem_power}, we have
\begin{equation}\label{betatf}
\begin{aligned}
\beta_t= &F_{\hat{TPR}(A=1)-\hat{TPR}(A=0)}(1.96\sigma_t;\mu_t,\sigma_t) - \\
&F_{\hat{TPR}(A=1)-\hat{TPR}(A=0)}(-1.96\sigma_t;\mu_t,\sigma_t)\\
\beta_f= &F_{\hat{FPR}(A=1)-\hat{FPR}(A=0)}(1.96\sigma_f;\mu_f,\sigma_f) - \\
&F_{\hat{FPR}(A=1)-\hat{FPR}(A=0)}(-1.96\sigma_f;\mu_f,\sigma_f).
\end{aligned}
\end{equation}
Following \eqref{power2}, 
\begin{equation}\label{tprtest}
\begin{aligned}
\mu_t = &TPR(A=1) - TPR(A=0)\\
\sigma_t = & \sqrt{\frac{\sigma_{t1}^2}{n_{t1}}+\frac{\sigma_{t0}^2}{n_{t0}}}\\
\sigma_{t1}^2 = & TPR(A=1)\cdot (1-TPR(A=1))\\
\sigma_{t0}^2 = & TPR(A=0)\cdot (1-TPR(A=0))\\
n_{t1} = & |\{Y=1, A=1\}|\\
n_{t0} = & |\{Y=1, A=0\}|.
\end{aligned}
\end{equation}
Given that $TPR(A=1)$ and $TPR(A=0)$ are predetermined by the classifier, the only factors affecting $\beta_t$ are the sample size $n_{t1}$ and $n_{t0}$. Similarly, the only factors affecting $\beta_t$ are the sample size $n_{f1} = |\{Y=0, A=1\}|$ and $n_{f0} = |\{Y=0, A=0\}|$. Assuming $Y, A\in\{0,1\}$ are uniformly distributed, we have $n_{t1} = n_{t0} = n_{f1}=n_{f0} = 0.25n$.

\subsubsection{Comparative Separation}

Comparative separation is considered satisfied when both $H_0^c$ and $H_0^w$ are accepted. Therefore, the Type II error rate -- the probability of comparative separation is tested as satisfied when it is actually violated --  on a test set $S_p$ is
\begin{equation}\label{betap}
\begin{aligned}
\beta_p = &\beta_c\cdot \beta_w.
\end{aligned}
\end{equation}
According to Proposition~\ref{theorem_power}, we have
\begin{equation}\label{betacw}
\begin{aligned}
\beta_c= &F_{\hat{TPR}(1,0)-\hat{TPR}(0,1)}(1.96\sigma_c;\mu_c,\sigma_c) - \\
&F_{\hat{TPR}(1,0)-\hat{TPR}(0,1)}(-1.96\sigma_c;\mu_c,\sigma_c)\\
\beta_w= &F_{\hat{TPR}(1,1)-\hat{TPR}(0,0)}(1.96\sigma_w;\mu_w,\sigma_w) - \\
&F_{\hat{TPR}(1,1)-\hat{TPR}(0,0)}(-1.96\sigma_w;\mu_w,\sigma_w).
\end{aligned}
\end{equation}
Following \eqref{power2}, 
\begin{equation}\label{tprc}
\begin{aligned}
\mu_c = &TPR(1,0) - TPR(0,1)\\
\sigma_c = & \sqrt{\frac{\sigma_{(1,0)}^2}{n_{(1,0)}}+\frac{\sigma_{(0,1)}^2}{n_{(0,1)}}}\\
\sigma_{(1,0)}^2 = & TPR(1,0)\cdot (1-TPR(1,0))\\
\sigma_{(0,1)}^2 = & TPR(0,1)\cdot (1-TPR(0,1))\\
n_{(1,0)} = & |\{Y_{ij}=1, A_{ij}=(1,0)\}|+|\{Y_{ij}=-1, A_{ij}=(0,1)\}|\\
n_{(0,1)} = &|\{Y_{ij}=1, A_{ij}=(0,1)\}|+|\{Y_{ij}=-1, A_{ij}=(1,0)\}|.
\end{aligned}
\end{equation}
Similarly, the only factors affecting $\beta_w$ are the sample size $n_{(1,1)} = |\{Y_{ij}=1, A_{ij}=(1,1)\}|+|\{Y_{ij}=-1, A_{ij}=(1,1)\}|$ and $n_{(0,0)} = |\{Y_{ij}=1, A_{ij}=(0,0)\}|+|\{Y_{ij}=-1, A_{ij}=(0,0)\}|$. Assuming $Y, A\in\{0,1\}$ are uniformly distributed, we have $n_{(1,0)} = n_{(0,1)} = n_{(1,1)}=n_{(0,0)} = \frac{1}{8}n_p$. This is because half of the sampled data with $y_i = y_j$ cannot be used to test comparative separation -- for example, a meaningful comparative judgment cannot be provided when a human rater is asked to compare which of the two dog images better resembles a dog. As a result, in the case of binary classification, two times the comparative judgments are needed in comparison to the test size of separation ($n_p = 2n$) to reach the same level of Type II error rates.

Note that, in regression problems where $P(Y_{ij}=0)\approx0$, we hypothesize that the same number of test data will be required to statistically evaluate comparative separation and separation ($n_p = n$). However, this requires further analysis and there does not exist any reliable metrics to evaluate separation on regression problems yet (the mutual information-based metrics require addition predictors to approximate the probabilities thus are not reliable).

\noindent\textbf{Answer to RQ3: }The Type II error rate of testing separation or comparative separation can be calculated when the true distribution of the test data is known. Roughly two times the comparative judgments are needed in comparison to the test size of separation ($n_p = 2n$) to reach the same level of Type II error rates in binary classifications. These conclusions will be validated with simulations and experiments on real world data in Section~\ref{sec:Experiment}.

\section{Simulations and Experiments}
\label{sec:Experiment}

In this section, the answers to the three research questions will be validated empirically through simulations and experiments on real world data.

\subsection{Simulations}

We simulate four different classifiers $f_{\theta_i}(x),\, i\in[0,3]$ on the same test data distribution. Table~\ref{tab:simulation} shows the underlying true distribution of the test data and the predictions of each classifier. For each classifier, $n=1,000$ or $2,000$ test data points are randomly sampled following this distribution to evaluate separation while $n_p=2n$ test data pairs are randomly sampled to evaluate comparative separation. Every evaluation is repeated for $r = 10,000$ times to measure the Type I or Type II error rate of the evaluation.

\begin{table}[!tbh]
\caption{Simulation distributions.}\label{tab:simulation}
\small
\centering
\setlength\tabcolsep{4pt}
\begin{tabular}{l|c|c|c|c|}
Underlying true probability & $f_{\theta_0}(x)$ & $f_{\theta_1}(x)$ & $f_{\theta_2}(x)$ & $f_{\theta_3}(x)$ \\\hline
$P(C = 1, Y = 1, A = 1)$      & 0.220         & 0.220         & 0.231        & 0.230         \\
$P(C = 0, Y = 1, A = 1)$      & 0.055        & 0.055        & 0.044        & 0.045        \\\hline
$P(C = 1, Y = 0, A = 1)$      & 0.090         & 0.081        & 0.081        & 0.105         \\
$P(C = 0, Y = 0, A = 1)$     & 0.135        & 0.144        & 0.144        & 0.120        \\\hline
$P(C = 1, Y = 1, A = 0)$      & 0.180         & 0.180         & 0.171        & 0.200         \\
$P(C = 0, Y = 1, A = 0)$      & 0.045        & 0.045        & 0.054        & 0.025        \\\hline
$P(C = 1, Y = 0, A = 0)$      & 0.110         & 0.121        & 0.121        & 0.100         \\
$P(C = 0, Y = 0, A = 0)$      & 0.165        & 0.154        & 0.154        & 0.175        \\\hline
\end{tabular}
\end{table}

\subsubsection{RQ1: Is comparative separation equivalent to separation in binary classifications?}

From the probabilities in Table~\ref{tab:simulation}, the true values of the metrics of separation and comparative separation can be calculated for each classifier. For example, 
\begin{equation*}
\begin{aligned}
TPR(A = a_i) &= P(C = 1| Y = 1, A = a_{i}) \\
&= \frac{P(C = 1, Y = 1, A = a_{i})}{P(Y = 1, A = a_{i})},
\end{aligned}
\end{equation*}
\begin{equation*}
\begin{aligned}
&TPR(a_{ij}) = P(C_{ij} = 1| Y_{ij} = 1, A_{ij} = a_{ij})\\
&=\frac{P(C=1, Y=1, A=a_i)\cdot P(C=0, Y=0, A=a_j)}{P(Y=1, A=a_i)\cdot P(Y=0, A=a_j)},
\end{aligned}
\end{equation*}
where $P(Y=y_i, A=a_{i}) = P(C = 1, Y = y_i, A = a_{i})+P(C = 0, Y = y_i, A = a_{i})$.

Table~\ref{tab:metrics} shows the metrics calculated for these four classifiers. As we can see, only $f_{\theta_0}$ satisfies separation and comparative separation. The other three classifiers violate both separation and comparative separation. This is consistent with Theorem~\ref{theorem2} by showing that separation and comparative separation are always satisfied or violated at the same time in binary classifications.

\begin{table}[!tbh]
\caption{Calculated metrics of separation and comparative separation.}\label{tab:metrics}
\small
\centering
\setlength\tabcolsep{4pt}
\begin{tabular}{l|c|c|c|c|}
Metric & $f_{\theta_0}(x)$ & $f_{\theta_1}(x)$ & $f_{\theta_2}(x)$ & $f_{\theta_3}(x)$ \\\hline
$TPR(A=1) - TPR(A=0)$      & 0         & 0         & 0.080        & -0.053         \\
$FPR(A=1) - FPR(A=0)$      & 0        & -0.080        & -0.080       & 0.103        \\\hline
$TPR(1,0) - TPR(0,1)$      & 0         & -0.064        & -0.016        & 0.058        \\
$TPR(1,1) - TPR(0,0)$     & 0        & 0.064        & 0.112        & -0.120        \\\hline
\end{tabular}
\end{table}

\subsubsection{RQ2: How to statistically test whether separation or comparative separation is satisfied for binary classifiers?}

The statistical tests designed for separation and comparative separation rely on the law of large numbers and the central limit theorem. More specifically, they rely on the fact that all the calculated metrics approximate normal distributions in large samples as described in \eqref{norm1} and \eqref{norm2}:
\begin{equation*}
\begin{aligned}
\hat{TPR}(A=a) &= \overline{\{C|Y=1, A=a\}} \sim {N}(\mu, s^2),\\
\hat{TPR}(a_{ij}) &= \overline{\{C_{ij}\odot Y_{ij}|Y_{ij}=1, A_{ij}=a_{ij}\}} \sim {N}(\mu_{ij}, s^2_{ij}).
\end{aligned}
\end{equation*}
Without loss of generality, we validate these by simulating $\hat{TPR}(A=1)$ and $\hat{TPR}(1,0)$ for the $f_{\theta_0}(x)$ classifier.

Based on the law of large numbers and the central limit theorem, the expected mean and variance of $\hat{TPR}(A=1)$ are:
\begin{equation*}
\begin{aligned}
\mu &= TPR(A=1) =  \frac{P(C = 1, Y = 1, A = 1)}{P(Y = 1, A = 1)} = 0.8,\\
s^2 &= \frac{(1-\mu)\cdot\mu}{|\{Y=1, A=1\}|} = \frac{(1-\mu)\cdot\mu}{P(Y = 1, A = 1)\cdot n} = \frac{0.58}{n}.
\end{aligned}
\end{equation*}
The mean and variance of $\hat{TPR}(A=1)$ of $n=1,000$ random samples over $r=10,000$ randomly repeated simulations are:
\begin{equation*}
\begin{aligned}
\overline{\hat{TPR}(A=1)} &=  0.79992 \approx \mu,\\
s^2(\hat{TPR}(A=1)) &=  0.00058 \approx s^2.
\end{aligned}
\end{equation*}
The mean and variance of $\hat{TPR}(A=1)$ of $n=2,000$ random samples over $r=10,000$ randomly repeated simulations are:
\begin{equation*}
\begin{aligned}
\overline{\hat{TPR}(A=1)} &=  0.80026 \approx \mu,\\
s^2(\hat{TPR}(A=1)) &=  0.00029 \approx s^2.
\end{aligned}
\end{equation*}

Similarly, the expected mean and variance of $\hat{TPR}(1,0)$ based on the law of large numbers and the central limit theorem are:
\begin{equation*}
\begin{aligned}
\mu_{ij} &= \hat{TPR}(1,0) =  0.48,\\
s_{ij}^2 &= \frac{(1-\mu_{ij})\cdot\mu_{ij}}{n_{(1,0)}} \\
&= \frac{(1-\mu_{ij})\cdot\mu_{ij}}{2P(Y = 1, A = 1)\cdot P(Y = 0, A = 0)\cdot n_p} \\
&= \frac{1.65}{n_p}.
\end{aligned}
\end{equation*}
The mean and variance of $\hat{TPR}(1,0)$ of $n_p=2,000$ random samples over $r=10,000$ repeated simulations are:
\begin{equation*}
\begin{aligned}
\overline{\hat{TPR}(1,0)} &=  0.47964 \approx \mu_{ij},\\
s^2(\hat{TPR}(1,0)) &=  0.00083 \approx s_{ij}^2.
\end{aligned}
\end{equation*}
The mean and variance of $\hat{TPR}(1,0)$ of $n_p=4,000$ random samples over $r=10,000$ repeated simulations are:
\begin{equation*}
\begin{aligned}
\overline{\hat{TPR}(1,0)} &=  0.48029 \approx \mu_{ij},\\
s^2(\hat{TPR}(1,0)) &=  0.00041 \approx s_{ij}^2.
\end{aligned}
\end{equation*}
The simulation results are consistent with the expected results. We can then confirm that the separation and comparative separation metrics follow normal distributions with the expected means and variances in large samples. In addition, the sampled variance of each metric is inversely proportional to the sample size of the test data.

\begin{table}[!tbh]
\caption{Probability of detecting the violation of separation and comparative separation $(1-\beta)$.}\label{tab:detection}
\small
\centering
\setlength\tabcolsep{2pt}
\begin{tabular}{l|l|c|c|c|c|}
\multicolumn{2}{l|}{\multirow{2}{*}{Classifier}}  & \multicolumn{2}{c|}{Separation}   & \multicolumn{2}{c|}{Comparative separation}  \\\cline{3-6}
\multicolumn{2}{l|}{} & $n$=1,000 & $n$=2,000 & $n_p$ = 2,000 & $n_p$=4,000 \\\hline
 \multirow{2}{*}{$f_{\theta_0}(x)$} &  Expected & 0.0975         & 0.0975   & 0.0975         & 0.0975     \\
  &  Simulation &  0.0960         &  0.0961 &    0.0977       &  0.0965   \\\hline
 \multirow{2}{*}{$f_{\theta_1}(x)$} &  Expected & 0.4743         & 0.7464 &  0.5032         & 0.7692    \\
  &  Simulation & 0.4772         &   0.7349   & 0.5042 & 0.7776  \\\hline
   \multirow{2}{*}{$f_{\theta_1}(x)$} &  Expected & 0.7800         &  0.9682  & 0.7274 & 0.9484   \\
  &  Simulation &  0.7856         &  0.9683  &  0.7254&   0.9447 \\\hline
   \multirow{2}{*}{$f_{\theta_1}(x)$} &  Expected &  0.7890         &  0.9712   &   0.8232& 0.9813  \\
  &  Simulation & 0.7879         &   0.9697  & 0.8210 & 0.9811  \\\hline
\end{tabular}
\end{table}

\subsubsection{RQ3: How many test data are sufficient to reach a desired statistical power of testing separation and comparative separation, respectively?}

Following Proposition~\ref{theorem_power}, the probability of accepting a null hypothesis can be calculated when the true probability distribution is known. Therefore, we can calculate the expected probabilities of separation and comparative separation to be evaluated as violated for each classifier in the simulations. For $f_{\theta_0}(x)$, this is the Type I error rate and is only related to $\alpha$ as discussed in RQ2:
$$1-\beta = 1-(1-\alpha)^2 = 0.0975.$$ 
For the other three classifiers, this is the statistical power $1-\beta_r$ for separation and $1-\beta_p$ for comparative separation. Then, we count the frequency of separation and comparative separation to be tested as violated over $r=10,000$ randomly repeated samples of $S$ and $S_p$ with $n=1,000$ and $n_p=2,000$, respectively. Table~\ref{tab:detection} shows the result of the expected $1-\beta$ calculated with the true probability distributions and the simulation results. We can observe from Table~\ref{tab:detection}:
\bi
\item
Proposition~\ref{theorem_power} is validated since all the simulation results are close to the expected values.
\item
The Type I error rate of evaluation both separation and comparative separation are only related to $\alpha$ as discussed in RQ2.
\item
The conclusion in RQ3 that two times the comparative judgments are needed in comparison to the test size of separation ($n_p = 2n$) to reach the same level of Type II error rate is also validated -- the Type II error rates of evaluating separation and comparative separation are similar when $n_p=2n$.
\item
For both separation and comparative separation evaluation, the type II error rates could always be reduced to a target value by increasing the size of the test data.
\ei

\subsection{Experiments on real world data}
Simulations have validated the theoretical analysis of comparative separation. 
In this subsection, we demonstrate the effectiveness of comparative separation with the application to three real world datasets. Two of them are binary classification datasets and one of them is a regression dataset. 

\begin{table}[!tbh]
\centering
\caption{Description of the binary classification datasets used in the experiment.}
\label{tab:class_data}
\setlength\tabcolsep{2pt}
\begin{tabular}{|l|c|c|c|c|c|}
\hline
\multirow{2}{*}{\textbf{Dataset}}    & \multirow{2}{*}{\textbf{$|X|$}} & \multicolumn{2}{c|}{\textbf{Sensitive Attribute}}                                                                                        & \multicolumn{2}{c|}{\textbf{Class Label}}                      \\ \cline{3-6}
 &            & \textbf{$A=1$}                     & \textbf{$A=0$}                                                   & \textbf{$Y=1$} & \textbf{$Y=0$} \\ \hline
\multirow{2}{*}{Compas}             & \multirow{2}{*}{7,214}                       & Sex-Male & Sex-Female & Did not                           &  \multirow{2}{*}{reoffended}      \\
& & Race-Caucasian & Race-Other & Reoffended &
\\ \hline
\multirow{2}{*}{German} & \multirow{2}{*}{1,000}                     & Sex-Male   & Sex-Female    & Good                             & Bad   \\
& & Age$>25$ & Age$\le 25$ & Credit & Credit \\
\hline
\end{tabular}
\end{table}

\subsubsection{Classification}
The two classification datasets used in this experiment are Compas~\cite{propublicadata} and German Credit Card~\cite{GERMAN} as shown in Table~\ref{tab:class_data}. These are two datasets commonly used in the analysis of algorithmic fairness. The Compas dataset simulates the scenario of a human judge making decisions on whether a defendant should be bailed out with ground truth labels indicating whether the defendant reoffended within two years. If comparative judgments are queried, judges can make easier decisions comparing which defendant of the two has a higher chance to reoffend. The German Credit Card dataset simulates the scenario of a credit card agent making decisions on whether an application should be approved. Similarly, comparative judgments on which of the two applicants has better credit are easier for the agent to decide. We experiment with these two classification datasets to see whether comparative separation can be applied to evaluate biases on sensitive attributes such as age, race, and sex on a test set of comparative judgments.

\begin{table}[!tbh]
\caption{Probability of detecting the violation of separation and comparative separation estimated by 1,000 random runs. Comparative separation is evaluated with $N_p = 2N$ randomly sampled pairs of comparative judgments.}\label{tab:class_result}
\small
\centering
\setlength\tabcolsep{2pt}
\begin{tabular}{l|l|cc|cc|}
\multirow{2}{*}{Treatment}   & \multirow{2}{*}{Metric}                 & \multicolumn{2}{c|}{Compas} &        \multicolumn{2}{c|}{German}        \\ \cline{3-6}
            &                        & sex    & race  & age    & sex   \\ \hline
\multirow{2}{*}{None}        & Separation             & 1      & 1     & 0.844  & 0.523 \\
            & Comparative Separation & 1      & 1     & 0.676  & 0.613 \\ \hline
\multirow{2}{*}{FairBalance} & Separation             & 0.245  & 0.37  & 0.241  & 0.18  \\
            & Comparative Separation & 0.312  & 0.411 & 0.323  & 0.313 \\ \hline
\multirow{2}{*}{Reweighing}  & Separation             & 0.286  & 0.642 & 0.223  & 0.135 \\
            & Comparative Separation & 0.38   & 0.455 & 0.325  & 0.35 \\\hline
\end{tabular}
\end{table}

For each experiment, the dataset is randomly split into 50\% training and 50\% testing. Then, one of the following three pre-processing treatments is applied to assign different sampling weights to the training data points to improve separation before training a logistic regression model with the default hyperparameters on scikit-learn\footnote{\url{https://scikit-learn.org/}}:
\bi
\item
\textbf{None}: no treatment is applied, the sample weight of every training data point is the same $w(x\in X) = 1$.
\item
\textbf{Reweighing~\cite{kamiran2012data}}: the sample weight of every training data point is determined by its sensitive group and class label $w_{RW}(a,y) = \frac{|A=a|\cdot|Y=y|}{|A=a, Y=y|}$. The goal is to make the weighted class distribution of every sensitive group the same.
\item
\textbf{FairBalance~\cite{yu2024fairbalance}}: the sample weight of every training data point is determined by its sensitive group and class label $w_{FB}(a, y) = \frac{|A=a|}{|A=a, Y=y|}$. The goal is to make the weighted class distribution of every sensitive group $1:1$.
\ei

\begin{table*}[!tbh]
\caption{Example backlog items in the Jira Software dataset.}
\small
\centering
\setlength\tabcolsep{3pt}
\label{tab:example_item}
\begin{tabular}{l|p{120mm}|c|c|}
issue key & title (description is omitted)    & is internal $A$ & story point $R$ \\\hline
JSW-1271 & As a JIRA Administrator I would like to be able to change the trigger of the night service     & 1            & 5          \\\hline
JSW-2478 & As a user I would like the ability to see a horizontal swimlane on my Rapid Board        & 1            & 13         \\\hline
JSW-3107 & As a user I would like the global statuses to be sorted by name in the Manage Rapid View to assist me when creating a new column & 1            & 2          \\\hline
JSW-1681 & Generic webwork aliases may clash with other plugins   & 0            & 5          \\\hline
JSW-2881 & Investigate: Log work operation of a task in the task board redirect the page to "ALL Assignee".  & 0            & 2          \\\hline
JSW-4768 & MVR: Make the burndown chart understand when an issue was added/removed from Sprint (i.e understand scope change)  & 0            & 13        \\\hline
\end{tabular}
\end{table*}

\noindent Table~\ref{tab:class_result} shows the probability of detecting violations of separation and comparative separation of the logistic regression models on the test set when the experiments are repeated for 1,000 times for each treatment and each dataset. From Table~\ref{tab:class_result}, we can observe that:
\be
\item
As expected, with training data pre-processed by FairBalance or Reweighing, the probabilities of the trained model violating separation or comparative separation become significantly lower (than None). This is consistent with the conclusions in \cite{yu2024fairbalance} and \cite{kamiran2012data}.
\item
The evaluation results of comparative separation are consistent with those of separation. This again validates our conclusions in Section~\ref{sec:methodology} that comparative separation is equivalent to separation for binary classification problems.
\ee

\subsubsection{Regression}

While the null-hypothesis tests for separation in Section~\ref{sect:separation} can only be applied to binary classification problems, the comparative separation tests in Section~\ref{sect:comp_separation} can be applied to both classification and regression problems. To test the effectiveness of comparative separation in regression problems, we used a dataset of story point estimation introduced by Choetkiertikul et al.~\cite{fu2022gpt2sp}. This dataset consists of the titles and descriptions of the backlog items, as well as their assigned story points collected through JIRA in 16 projects. During the pre-processing step, the titles and descriptions are concatenated as the final text feature~\cite{fu2022gpt2sp}. We specifically used the Jira Software project by separating project items into two sensitive groups: (1) internal user stories $A=0$ following the format ``As a [ROLE], I would like ...'' and (2) external defect-fixing requests $A=1$ which do not follow the user story format. Table~\ref{tab:example_item} shows three example backlog items from each sensitive group. The statistics of the Jira Software project is shown in Table~\ref{tab:jira_data}.

\begin{table}[!tbh]
\caption{Description of the Jira Software story point estimation dataset.}
\small
\centering
\setlength\tabcolsep{2pt}
\label{tab:jira_data}
\begin{tabular}{l|c|c|}
 \#Items  & Sensitive Attributes  & Story Points  \\\hline
\multirow{2}{*}{$|X|=284$}   &  $A=1$  internal: 127        & \multirow{2}{*}{$y \in \{1,2,3,5,8,13,20\}$}  \\
 & $A=0$   external: 157          &                    \\\hline
\end{tabular}
\end{table}

Story point estimation is an essential part of agile software development. Story points are unitless, project-specific estimates that help developers communicate their assumptions, discuss effort costs, and plan their upcoming sprint~\cite{pasuksmit2024systematic,tawosi2022investigating}. In every sprint, developers spend hours discussing and reaching a consensus on the estimated story points of each item on the backlog. Underestimating the items of one group over the other would cause severe and unfair consequences, such as overloading developers in charge of defect-fixing requests or overlooking their work. We experiment with this dataset to see whether comparative separation can be applied to evaluate biases for a regression model on a test set of comparative judgments.

Each time, the concatenated title and description of each backlog item is embedded by a pretrained Sentence-BERT model~\cite{reimers-2019-sentence-bert,reimers-2020-multilingual-sentence-bert} ``all-MiniLM-L6-v2'', then the embeddings are fed into a linear dense layer to predict the story points. The dense layer weights are trained with an Adam optimizer on the MAE loss. Before training, the following two pre-processing treatments are compared for their effects on mitigating algorithmic bias:
\bi
\item
\textbf{None}: no treatment is applied, the sample weight of every training data point is the same $w(x\in X) = 1$.
\item
\textbf{FairReweighing~\cite{XiFairReweighing2024}}: the sample weight of every training data point is determined by its sensitive group and ground truth story points: $w(a, y) = \frac{ P(a) P(y)}{P(a, y)}$ where $P(a)$, $P(y)$, and $P(a,y)$ are estimated probability densities. This is an adapted Reweighing algorithm for regression data.
\ei
Table~\ref{tab:regression_result} shows the experimental results. Similar to the reports from Xi and Yu~\cite{XiFairReweighing2024}, the (conditional) mutual information-based metric for separation $\hat{I}_{sep} = \frac{1}{n} \sum_{i=1}^n \log \frac{\rho(a_i\mid y_i, \hat{y}_i)}{\rho(a_i \mid y_i)}$ is significantly reduced after applying FairReweighing. However, the probability of violating comparative separation for FairReweighing is only slightly lower than that of None. This could indicate two possibilities:
\be
\item
The learned models of both None and FairReweighing treatments have little bias so that the bias mitigation treatment FairReweighing has little impact. This is supported by the low $\hat{I}_{sep}$ values for both treatments and the fact that their probabilities of violating comparative separation are similar to the ones of FairBalance and Reweighing in Table~\ref{tab:class_result}.
\item
It is also possible that comparative separation is not equivalent to separation when the dependent variables are continuous (regression data).
\ee

\begin{table}[!tbh]
\caption{Experimental results on the Jira Software story point estimation data. The Pearson and Spearman's rank correlation coefficient between the predicted story points and the ground truth story points are presented as $\rho$ and $r_s$. $\hat{I}_{sep}$ is a metric to evaluate the violation of separation~\cite{steinberg2020fairness}. Comparative separation is evaluated with $N_p = N$ randomly sampled pairs of comparative judgments. Probability of detecting the violation of comparative separation is estimated by 1,000 random runs. All the metrics are averaged from 10 random experiments.}\label{tab:regression_result}
\small
\centering
\setlength\tabcolsep{4pt}
\begin{tabular}{l|c|c|c|c|p{20mm}|}
Treatment      & MAE   & $\rho$ & $r_s$ & $\hat{I}_{sep}$  & Comparative Separation \\\hline
None           & 1.548 & 0.473   & 0.436    & 0.011 & 0.334                  \\
FairReweighing & 1.566 & 0.462   & 0.420    & 0.006 & 0.313 \\\hline
\end{tabular}
\end{table}



\section{Threats to validity}

\noindent \textbf{Conclusion validity: }As discussed in RQ2, the evaluation of comparative separation should take into consideration its Type I and Type II errors. The Type I error rate (the probability of finding a bias when there is none) is only related to $\alpha$ of the null hypothesis tests, while the Type II error rate (the probability of missing a bias when there exists one) is closely related to the sample size of the test set. Therefore, it is crucial to ensure a large enough test set when testing for both separation and comparative separation with the null hypotheses. 

\noindent \textbf{Construct validity: }The theoretical analysis in Section~\ref{sec:methodology} ensures that the designed tests directly measure the violations of separation or comparative separation. The simulation results and experiments on real world data in Section~\ref{sec:Experiment} further validate the theoretical analysis.

\noindent \textbf{Internal validity: }Sampling bias can have an impact on the evaluations of both separation and comparative separation. In the simulations and experiments, we minimize the sampling bias by repeating the experiments for 1,000 times each with random samples from the ground truth distribution or the dataset. 

\noindent \textbf{External validity: }This work suffers from external validity. 1. The theoretical analysis in Section~\ref{sec:methodology} only proves that comparative separation is equivalent to separation in binary classification. 2. The experimental result on the story point estimation dataset suggests that comparative separation may not be equivalent to separation in regressions. Therefore, whether comparative separation can be generalized to scenarios other than binary classification (such as regression, continuous sensitive attribute, etc.) requires further investigation. 



\section{Conclusion and future work}
\label{sec:Conclusions}
This paper proposes a novel fairness notion comparative separation to evaluate bias in machine learning software with comparative judgment test data. This alleviates the need of accurately annotated point-wise test data labels by only requiring pairwise comparisons between pairs of test data points -- e.g. A is better than B or user story B requires more effort than user story A. According to the law of comparative judgment~\cite{Thurstone1927}, providing such comparative judgments yields a lower cognitive burden for humans than deciding direct ratings or class labels. Empirical studies~\cite{Bramley2015, Verhavert2019} also find comparative judgments to have higher reliability and consistency than human annotated direct ratings or class labels. This work demonstrated the effectiveness of comparative separation with three research questions. In RQ1, we proved that comparative separation is equivalent to separation in binary classification problems. In RQ2, we showed the effectiveness of statistically testing separation and comparative separation with null hypothesis tests. In RQ3, we analyzed the relationship between the statistical power of the null hypothesis tests and the size of testing data. The theoretical analysis in the research questions were then supported by simulations on synthetic data and experiments on real world data.

To resolve the external validity threat that all the analysis were made for the binary classification scenario, we will explore the future work:
\bi
\item
Theoretically analyze the relationship between comparative separation and separation in regression problems.
\item
Experiment on more real world regression datasets to see whether the evaluation of comparation separation is consistent with separation.
\item
Conduct human subjects experiments to evaluate the effectiveness and efficiency of collecting comparative judgments as test data.
\ei
Overall, we believe that this work could benefit the software engineering research community by presenting a novel way of evaluating fairness and bias of machine learning software with comparative judgment test data. 

\section*{Acknowledgment}
This work is funded by NSF grants 2245796 and 2447631.

\bibliographystyle{IEEEtran}
\bibliography{zhe}

@MISC{amazon,
  title={Amazon scraps secret AI recruiting tool that showed bias against women},
  author= {Dastin, Jeffrey},
  howpublished = {https://www.reuters.com/article/us-amazon-com-jobs-automation-insight/amazon-scraps-secret-ai-recruiting-tool-that-showed-bias-against-women-idUSKCN1MK08G},
  publisher = {https://www.reuters.com/},
  year={2018}
}

@MISC{propublicadata,
  title={data for the propublica story 'machine bias'.},
  author= {propublica},
  howpublished = {\url{https://github.com/propublica/compas-analysis/}},
  year={2016}
}

@article{mortazavi2016analysis,
  title={Analysis of machine learning techniques for heart failure readmissions},
  author={Mortazavi, Bobak J and Downing, Nicholas S and Bucholz, Emily M and Dharmarajan, Kumar and Manhapra, Ajay and Li, Shu-Xia and Negahban, Sahand N and Krumholz, Harlan M},
  journal={Circulation: Cardiovascular Quality and Outcomes},
  volume={9},
  number={6},
  pages={629--640},
  year={2016},
  publisher={Am Heart Assoc}
}

@inproceedings{hardt2016equality,
  title={Equality of opportunity in supervised learning},
  author={Hardt, Moritz and Price, Eric and Srebro, Nati},
  booktitle={Advances in neural information processing systems},
  pages={3315--3323},
  year={2016}
}

@article{kamiran2012data,
  title={Data preprocessing techniques for classification without discrimination},
  author={Kamiran, Faisal and Calders, Toon},
  journal={Knowledge and Information Systems},
  volume={33},
  number={1},
  pages={1--33},
  year={2012},
  publisher={Springer}
}

@misc{COMPAS,
  title={propublica/compas-analysis},
  url = {https://github.com/propublica/compas-analysis},
  year={2015}
}

@article{GERMAN,
  title={UCI:Statlog (German Credit Data) Data Set},
  url = {https://archive.ics.uci.edu/ml/datasets/Statlog+(German+Credit
  +Data)},
  year={2000}
}

@article{yu2024fairbalance,
  title={FairBalance: How to Achieve Equalized Odds With Data Pre-processing},
  author={Yu, Zhe and Chakraborty, Joymallya and Menzies, Tim},
  journal={IEEE Transactions on Software Engineering},
  year={2024},
  publisher={IEEE}
}

@article{fu2022gpt2sp,
  title={GPT2SP: A transformer-based agile story point estimation approach},
  author={Fu, Michael and Tantithamthavorn, Chakkrit},
  journal={IEEE Transactions on Software Engineering},
  volume={49},
  number={2},
  pages={611--625},
  year={2022},
  publisher={IEEE}
}

@article{pasuksmit2024systematic,
  title={A Systematic Literature Review on Reasons and Approaches for Accurate Effort Estimations in Agile},
  author={Pasuksmit, Jirat and Thongtanunam, Patanamon and Karunasekera, Shanika},
  journal={ACM Computing Surveys},
  year={2024},
  publisher={ACM New York, NY}
}

@inproceedings{tawosi2022investigating,
  title={Investigating the effectiveness of clustering for story point estimation},
  author={Tawosi, Vali and Al-Subaihin, Afnan and Sarro, Federica},
  booktitle={2022 IEEE International Conference on Software Analysis, Evolution and Reengineering (SANER)},
  pages={827--838},
  year={2022},
  organization={IEEE}
}

@article{Caliskan2017,
  title={Semantics derived automatically from language corpora contain human-like biases},
  author={Caliskan, Aylin and Bryson, Joanna J. and Narayanan, Arvind},
  journal={Science},
  volume={356},
  number={6334},
  pages={183--186},
  year={2017}
}

@inproceedings{Buolamwini2018,
  title={Gender Shades: Intersectional Accuracy Disparities in Commercial Gender Classification},
  author={Buolamwini, Joy and Gebru, Timnit},
  booktitle={Proceedings of the Conference on Fairness, Accountability, and Transparency (FAT*)},
  pages={77--91},
  year={2018}
}

@misc{Angwin2016,
  title={Machine Bias: There’s software used across the country to predict future criminals. And it’s biased against blacks},
  author={Angwin, Julia and Larson, Jeff and Mattu, Surya and Kirchner, Lauren},
  howpublished={ProPublica},
  year={2016},
  note={\url{https://www.propublica.org/article/machine-bias-risk-assessments-in-criminal-sentencing}}
}

@article{redmond2002data,
  title={A data-driven software tool for enabling cooperative information sharing among police departments},
  author={Redmond, Michael and Baveja, Alok},
  journal={European Journal of Operational Research},
  volume={141},
  number={3},
  pages={660--678},
  year={2002},
  publisher={Elsevier}
}

@article{yannakakis2015ratings,
  title={Ratings are overrated!},
  author={Yannakakis, Georgios N and Mart{\'\i}nez, H{\'e}ctor P},
  journal={Frontiers in ICT},
  volume={2},
  pages={13},
  year={2015},
  publisher={Frontiers Media SA}
}

@article{zhang2020learning,
  title={Learning from crowdsourced ordinal data: A survey},
  author={Zhang, Jing and Wang, Xin and Wu, Xindong},
  journal={IEEE Transactions on Knowledge and Data Engineering},
  year={2020},
  publisher={IEEE}
}

@article{Dwork2012,
  title={Fairness Through Awareness},
  author={Dwork, Cynthia and Hardt, Moritz and Pitassi, Toniann and Reingold, Omer and Zemel, Richard},
  journal={Proceedings of Innovations in Theoretical Computer Science},
  pages={214--226},
  year={2012}
}

@article{Dwork2012b,
  title={The Geometry of Individual Fairness},
  author={Dwork, Cynthia and Ilvento, Christina},
  journal={Manuscript},
  year={2018}
}

@inproceedings{Agarwal2019,
  title={A Reductions Approach to Fair Classification},
  author={Agarwal, Alekh and Beygelzimer, Alina and Dud{\'i}k, Miroslav and Langford, John and Wallach, Hanna},
  booktitle={Proceedings of the International Conference on Machine Learning (ICML)},
  pages={60--69},
  year={2018}
}

@article{Berk2017,
  title={A Convex Framework for Fair Regression},
  author={Berk, Richard and Heidari, Hoda and Jabbari, Shahin and Joseph, Matthew and Kearns, Michael and Morgenstern, Jamie and Neel, Seth and Roth, Aaron},
  journal={arXiv preprint arXiv:1706.02409},
  year={2017}
}

@article{XiFairReweighing2024,
  title={FairReweighing: A Density-Based Preprocessing Framework for Fair Regression and Classification},
  author={Xi, Xiaoyin and Yu, Zhe},
  journal={(Manuscript / Preprint)},
  year={2024}
}

@article{Thurstone1927,
  title={A Law of Comparative Judgment},
  author={Thurstone, Louis L.},
  journal={Psychological Review},
  volume={34},
  number={4},
  pages={273--286},
  year={1927}
}

@article{Bramley2015,
  title={Investigating the Reliability of Adaptive Comparative Judgement},
  author={Bramley, Tom},
  journal={Cambridge Assessment Research Report},
  year={2015}
}

@article{Verhavert2019,
  title={A Meta-Analysis on the Reliability of Comparative Judgement},
  author={Verhavert, Samuel and Bouwer, Renske and Donche, Vincent and De Maeyer, Sven},
  journal={Assessment in Education: Principles, Policy \& Practice},
  volume={26},
  number={5},
  pages={541--562},
  year={2019}
}

@incollection{Furnkranz2010,
  title={Preference Learning},
  author={F{\"u}rnkranz, Johannes and H{\"u}llermeier, Eyke},
  booktitle={Preference Learning},
  publisher={Springer},
  pages={3--20},
  year={2010}
}

@inproceedings{Brinker2004,
  title={Active Learning of Label Ranking Functions},
  author={Brinker, Klaus},
  booktitle={Proceedings of the International Conference on Machine Learning (ICML)},
  year={2004}
}

@inproceedings{Burges2005,
  title={Learning to Rank Using Gradient Descent},
  author={Burges, Christopher and Shaked, Tal and Renshaw, Erin and Lazier, Ari and Deeds, Matthew and Hamilton, Nicole and Hullender, Greg},
  booktitle={Proceedings of the International Conference on Machine Learning (ICML)},
  pages={89--96},
  year={2005}
}

@article{Qian2015,
  title={Learning User Preferences by Adaptive Pairwise Comparison},
  author={Qian, Liu and Gao, Jing and Jagadish, H. V.},
  journal={Proceedings of the VLDB Endowment},
  volume={8},
  number={11},
  pages={1322--1333},
  year={2015}
}

@article{khan2025efficient,
  title={Efficient Story Point Estimation With Comparative Learning},
  author={Khan, Monoshiz Mahbub and Xi, Xioayin and Meneely, Andrew and Yu, Zhe},
  journal={arXiv preprint arXiv:2507.14642},
  year={2025}
}

@article{Mitchell2018,
  title={Prediction-Based Decisions and Fairness: A Catalogue of Choices, Assumptions, and Definitions},
  author={Mitchell, Shira and Potash, Eric and Barocas, Solon and D'Amour, Alexander and Lum, Kristian},
  journal={arXiv preprint arXiv:1811.07867},
  year={2018}
}

@article{Kleinberg2018,
  title={Inherent Trade-Offs in the Fair Determination of Risk Scores},
  author={Kleinberg, Jon and Mullainathan, Sendhil and Raghavan, Manish},
  journal={Proceedings of Innovations in Theoretical Computer Science (ITCS)},
  year={2017}
}

@inproceedings{Singh2018,
  title={Fairness of Exposure in Rankings},
  author={Singh, Ashudeep and Joachims, Thorsten},
  booktitle={Proceedings of the Conference on Fairness, Accountability, and Transparency (FAT*)},
  pages={93--102},
  year={2018}
}

@inproceedings{Zehlike2017,
  title={FA*IR: A Fair Top-$k$ Ranking Algorithm},
  author={Zehlike, Meike and Bonchi, Francesco and Castillo, Carlos},
  booktitle={Proceedings of the ACM Conference on Information and Knowledge Management (CIKM)},
  pages={1569--1578},
  year={2017}
}

@article{steinberg2020fairness,
  title={Fairness measures for regression via probabilistic classification},
  author={Steinberg, Daniel and Reid, Alistair and O'Callaghan, Simon},
  journal={arXiv preprint arXiv:2001.06089},
  year={2020}
}

@inproceedings{narasimhan2020pairwise,
  title={Pairwise fairness for ranking and regression},
  author={Narasimhan, Harikrishna and Cotter, Andrew and Gupta, Maya and Wang, Serena},
  booktitle={Proceedings of the AAAI Conference on Artificial Intelligence},
  volume={34},
  number={04},
  pages={5248--5255},
  year={2020}
}

@inproceedings{reimers-2019-sentence-bert,
  title = "Sentence-BERT: Sentence Embeddings using Siamese BERT-Networks",
  author = "Reimers, Nils and Gurevych, Iryna",
  booktitle = "Proceedings of the 2019 Conference on Empirical Methods in Natural Language Processing",
  month = "11",
  year = "2019",
  publisher = "Association for Computational Linguistics",
  url = "https://arxiv.org/abs/1908.10084",
}

@inproceedings{reimers-2020-multilingual-sentence-bert,
  title = "Making Monolingual Sentence Embeddings Multilingual using Knowledge Distillation",
  author = "Reimers, Nils and Gurevych, Iryna",
  booktitle = "Proceedings of the 2020 Conference on Empirical Methods in Natural Language Processing",
  month = "11",
  year = "2020",
  publisher = "Association for Computational Linguistics",
  url = "https://arxiv.org/abs/2004.09813",
}
\end{document}